\documentclass{llncs}

\usepackage{makeidx}
\usepackage{graphicx}
\usepackage{amssymb}
\usepackage{amsmath}
\usepackage{mathtools}
\usepackage[utf8]{inputenc}
\usepackage{nicefrac}
\usepackage{enumerate}
\usepackage{enumitem}
\usepackage[linesnumbered,vlined]{algorithm2e}
\usepackage{hyperref}
\usepackage{cleveref}

\newtheorem{fact}[theorem]{Fact}

\crefname{fact}{Fact}{Facts}
%\crefname{property}{Property}{Properties}
%\Crefname{algocf}{Algorithm}{Algorithms}
\DeclarePairedDelimiterX\Set[2]{\lbrace}{\rbrace}%
 { #1 \,\delimsize|\, #2 }

%\newenvironment{example}{\begin{exmp}
%\begin{normalfont}}{\end{normalfont}
%\end{exmp}}

%\newenvironment{remark}{\begin{remk}
%\begin{normalfont}}{\end{normalfont}
%\end{remk}}

%%%%%%%%%%%%%%%%%%%% proof environments

\def\FullBox{\hbox{\vrule width 8pt height 8pt depth 0pt}}

\def\qed{\ifmmode\qquad\FullBox\else{\unskip\nobreak\hfil
\penalty50\hskip1em\null\nobreak\hfil\FullBox
\parfillskip=0pt\finalhyphendemerits=0\endgraf}\fi}

\def\qedsketch{\ifmmode\Box\else{\unskip\nobreak\hfil
\penalty50\hskip1em\null\nobreak\hfil$\Box$
\parfillskip=0pt\finalhyphendemerits=0\endgraf}\fi}

%\newenvironment{proof}{\begin{trivlist} \item {\bf Proof:~~}}
%  {\qed\end{trivlist}}

%%%%%%%%%%%%%%%%%%%%%%% text macros

%%%%%%%%%%%%%%%%%%%%%%% general useful macros

% \newcommand{\pr}[1]{\Pr\left[#1\right]}

%%%%%%%%%%%%%%%%%%% macros particular to this course

 % and their complements

\DeclareMathOperator{\emit}{\mathtt{emit}}
\DeclareMathOperator{\nnz}{\mathtt{nnz}}

\newcommand{\sort}{\textrm{sort}}

\begin{document}

\DeclarePairedDelimiterX\Sett[2]{\lbrace}{\rbrace}%
{ #1 \,\delimsize|\, #2 }
\title{The Input/Output Complexity of Sparse Matrix Multiplication\thanks{This work is supported by the Danish National Research Foundation under the Sapere Aude program.}}

\author{Rasmus Pagh \and Morten Stöckel}

\institute{IT University of Copenhagen, \\ \email{ \{pagh,mstc\}@itu.dk}}

\maketitle

\begin{abstract}
We consider the problem of multiplying sparse matrices (over a semiring) where the number of non-zero entries is larger than main memory.
In the classical paper of Hong and Kung (STOC '81) it was shown that to compute a product of dense $U \times U$ matrices, $\Theta \left( U^3 / (B \sqrt{M}) \right)$ I/Os are necessary and sufficient in the I/O model with internal memory size $M$ and memory block size $B$.

In this paper we generalize the upper and lower bounds of Hong and Kung to the sparse case.
Our bounds depend of the number $N = \nnz(A)+\nnz(C)$ of nonzero entries in $A$ and $C$, as well as the number $Z = \nnz(AC)$ of nonzero entries in $AC$.

We show that $AC$ can be computed using $\tilde{O} \left( \tfrac{N}{B} \min\left(\sqrt{\tfrac{Z}{M}},\tfrac{N}{M}\right) \right)$ I/Os, with high probability.
This is tight (up to polylogarithmic factors) when only semiring operations are allowed, even for dense rectangular matrices:
We show a lower bound of $\Omega \left( \tfrac{N}{B} \min\left(\sqrt{\tfrac{Z}{M}},\tfrac{N}{M}\right) \right)$ I/Os.

While our lower bound uses fairly standard techniques, the upper bound makes use of ``compressed matrix multiplication'' sketches, which is new in the context of I/O-efficient algorithms, and a new matrix product size estimation technique that avoids the ``no cancellation'' assumption.
\end{abstract}

%!TEX root = main.tex

\section{Introduction}
In this paper we consider the fundamental problem of multiplying matrices that are \emph{sparse}, that is, the number of nonzero entries in the input matrices (but not necessarily the output matrix) is much smaller than the number of entries.
Matrix multiplication is a fundamental operation in computer science and mathematics, due to the wide range of applications and reductions to it --- e.g.~computing the determinant and inverse of a matrix, or Gaussian elimination.
Matrix multiplication has also seen lots of use in non-obvious applications such as bioinformatics~\cite{vandongen00}, computing matchings~\cite{Rabin:1989:MMG:72302.72307,mvv89} and algebraic reasoning about graphs, e.g. cycle counting~\cite{Alon:1995:COL:210332.210337,aloncount}.

Matrix multiplication in the general case has been widely applied and studied in a pure math context for decades.
In an algorithmic context matrix multiplication is known to be computable using $O(n^{\omega})$ ring operations, for some constant $\omega$ between 2 and 3.
The first improvement over the trivial cubic algorithm was achieved in $1969$ in the seminal work of Strassen~\cite{strassen} showing  $\omega \leq \log_2 7$ and most recently Vassilevska Williams~\cite{Williams:2012:MMF:2213977.2214056} improved this to $\omega < 2.373$.

Matrix multiplication over a semiring, where additive inverses cannot be used, is better understood.
In the I/O model introduced by Aggarwal and Vitter~\cite{AV1988} the optimal matrix multiplication algorithm for the dense case already existed (see \Cref{sec:rel}) and since then sparse-dense and sparse-sparse combinations of vector and matrix products have been studied, e.g. in~\cite{brodal:spmv,greiner2010complexity,Pagh:2013:CMM:2493252.2493254}.

The main contribution of this paper is a tight bound for matrix multiplication over a semiring in terms of the number of nonzero entries in the input and output matrices, generalizing the classical result of Hong and Kung on dense matrices~\cite{Jia-Wei:1981:ICR:800076.802486} to the sparse case.

\subsection{Preliminaries}\label{sec:pre}
Let $A\in R^{U \times U}$ and $C\in R^{U \times U}$ be matrices of $U$ rows and $U$ columns and let every entry $[A]_{i,j}, [C]_{i',j'} \in R$ for semiring $R$. Further for matrix $A$ let $A_{i*}$ denote row $i$ of $A$ and let $A_{* j}$ denote column $j$ of $A$.
The matrix product $AC$, where each entry $[AC]_{i,j}$, $i,j \in [U]$ is given as $[AC]_{i,j} = \sum_{k} [A]_{i,k} [C]_{k,j}$.
A nonzero term $[A]_{i,k} [C]_{k,j}$ is referred to as an \emph{elementary product}.
We say that there is \emph{no cancellation} of terms when  $[AC]_{i,j} = 0$ implies that $[A]_{i,k} [C]_{k,j} = 0$ for all $k$.
For \emph{sparse} semiring matrix multiplication, the number of entry pairs with nonzero product measures the number of operations performed up to a constant factor assuming optimal representation of the matrices. Specifically, let $\sum_{k=1}^n |\Sett{j}{[A]_{j,k} \neq 0}||\Sett{i}{[C]_{k,i} \neq 0}|$ be the number of such nonzero pairs of matrix entries. Finally let $\nnz(A) = |\Sett{i,j}{[A]_{i,j} \neq 0}|$ denote the number of nonzero entries of matrix $A$.
When no explicit base is stated, logarithms in this paper are base $2$.

\medskip

{\bf External memory model.} This model of computation~\cite{AV1988} is an abstraction of a two-level memory hierachy: We have an internal memory holding $M$ data items (``words'') and a disk of infinite size holding the remaining data. Transfers between internal memory and disk happen in blocks of $B$ words, and a word must be in internal memory to be manipulated. The cost of an algorithm in this model is the number of block transfers (I/Os) done by the algorithm. We will use $\sort(n) = O((n/B) \log_{M/B}(n/B))$ as shorthand for the sorting complexity of $n$ data items in the external memory model and $\tilde{O}(\cdot)$-notation to suppress polylogarithmic factor in input size $N$ and matrix dimension $U$.

We assume that a word is big enough to hold a matrix element from a semiring as well as the matrix coordiantes of that element, i.e., a block holds $B$ matrix elements. We restrict attention to algorithms that work with semiring elements as an abstract type, and can only copy them, and combine them using semiring operations. We refer to this restriction as the {\em semiring I/O model}. Our upper bound uses a slight extension of this model in which equality check is allowed, which allows us to take advantage of {\em cancellations}, i.e., inner products in the matrix product that are zero in spite of nonzero elementary products.

\medskip

{\bf The problem we solve.}
Given matrices $A\in R^{U \times U}$ and $C\in R^{U \times U}$ containing $\nnz(A)$ and $\nnz(C)$ non-zero semiring elements, respectively, we wish to output a sparse representation of the matrix product $AC$ in the external memory model.
We are dealing with sparse matrices represented as a list of tuples of the form $(i,j,[A]_{ij})$, where $[A]_ij \in R$ is a (nonzero) matrix entry.
To produce output we must call a function $\emit(e)$ for every nonzero entry $(i,j,(AC)_ij)$ of $AC$.
We only allow $\emit(\cdot)$ to be called once on each output element, but impose no particular order on the sequence of outputs.

We note that the algorithm could be altered to write the entire output before termination by, instead of calling $\emit(\cdot)$, simply writing the output element to a disk buffer, outputting all $\nnz(AC)$ elements using $O(\nnz(AC)/B)$ additional I/Os.
However, in some applications such as database systems (see~\cite{Amossen:2009:FJS:1514894.1514909}) there may not be a need to materialize the matrix product on disk, so we prefer the more general method of generating output.

\subsection{Related work}\label{sec:rel}
The external memory model was introduced by Aggarwal and Vitter in their seminal paper~\cite{AV1988}, where they provide tight bounds for a collection of central problems.

An I/O-optimal matrix multiplication algorithm for dense semiring matrices was achieved by Hong and Kung~\cite{Jia-Wei:1981:ICR:800076.802486}:
Group the matrices into $k\sqrt{M} \times k\sqrt{M}$ submatrices where constant $k$ is picked such that three $\sqrt{M} \times \sqrt{M}$ matrices fit into internal memory.
This reduces the problem to $O ( (U^3 / \sqrt{M})^3 )$ matrix products that fit in main memory, costing $O(M/B)$ I/Os each, and hence $O( U^3 / B \sqrt{M} )$ in total~\cite{BRICS2002}.
Hong and Kung also provided a tight lower bound $\Omega( U^3 / B \sqrt{M} )$ that holds for algorithms that work over a semiring.
(It does not apply to algorithms that make use of subtraction, such as fast matrix multiplication methods, for which the blocking method described above yields an I/O complexity of $U^\omega / (M^{\omega/2-1} B)$ I/Os.)
\begin{sloppypar}
For sparse matrix multiplication the previously best upper bound~\cite{Amossen:2009:FJS:1514894.1514909}, shown for Boolean matrix products but claimed for any semiring, is $\tilde{O}(N \sqrt{\nnz(AC)} / B M^{1/8})$.
\end{sloppypar}
It seems that this bound requires ``no cancellation of terms'' (or more specifically, the output sensitivity is with respect to the number of output entries that have a nonzero elementary product).
 Our new upper bound of this paper improves upon this: The Monte Carlo algorithm of \Cref{thm:main1} has strictly lower I/O complexity for the entire parameter space and makes no assumptions about cancellation.

An important subroutine in our algorithm is dense-vector sparse matrix multiplication:  For a vector $y$ and sparse matrix $S$ we can compute their product using optimal $\tilde{O}((\nnz(S)+\nnz(y))/B)$ I/Os~\cite{brodal:spmv} - this holds for arbitrary layouts of the vector and matrix on disk.

Our algorithm has an interesting similarity to Williams and Yu's recent output sensitive matrix multiplication algorithm~\cite[Section 6]{doi:10.1137/1.9781611973402.135}.
Their algorithm works by splitting the matrix product into 4 submatrices of equal dimension, running a randomized test to determine which of these subproblems contain a nonzero entry.
Recursing on the non-zero submatrices, they arrive at an output sensitive algorithm.
We perform a similar recursion, but the splitting is computed differently in order to recurse in a balanced manner, such that each subproblem at a given level of the recursion outputs approximately the same number of entries in the matrix product.

Size estimation of the number of nonzeros in matrix products was used by Cohen~\cite{cohen:1998,Cohen:1994:EST:1398518.1399012} to compute the order of multiplying several matrices to minimize the total number of operations.
For constant error probability this algorithm uses $O( \varepsilon^{-2} N )$ operations in the RAM model to perform the size estimation.
For $\varepsilon > 4 / \nnz(AC)^{1/4}$ Amossen et al~\cite{Amossen:2010:BSE:1886521.1886554} improved the running time to be expected $O(N)$ in the RAM model and expected $O(\sort(N))$ in the I/O model.
Contrary to the approaches of~\cite{Amossen:2010:BSE:1886521.1886554,cohen:1998,Cohen:1994:EST:1398518.1399012} our new size estimation algorithm presented in \Cref{sec:sizest} is able to deal with cancellation of terms, and it uses $\tilde{O} ( \varepsilon^{-3} N / B )$ I/Os.
Informally, the main idea of our size estimation algorithm is to multiply a sequence of vectors $x$ with certain properties onto $AC$ but in the computationally inexpensive order $(xA)B$, in order to produce linear sketches of the rows (columns) of $AC$. %This is often seen in variants of iterative algebra, see e.g. \cite{greiner2010complexity}.

\subsection{New results}

We present a new upper bound in the I/O model for sparse matrix multiplication over semirings. Our I/O complexity is at least a factor of roughly $M^{3/8}$ better than that of~\cite{Amossen:2009:FJS:1514894.1514909}.
We show the following theorem:
\begin{theorem}\label{thm:main1}
Let $A\in R^{U \times U}$ and $C\in R^{U \times U}$ be matrices with entries from a semiring $R$, and let $N = \nnz(A)+\nnz(C)$, $Z = \nnz(AC)$. There exist algorithms (a) and (b) such that:
\begin{enumerate}[label=(\alph*)]
\item emits the set of nonzero entries of $AC$ with probability at least $1 - 1/U$, using $\tilde{O} \left( N \sqrt{Z} / (B \sqrt{M}) \right)$ I/Os.
\item emits the set of nonzero entries of $AC$, and uses $O \left( N^2 / (MB) \right)$ I/Os.
\end{enumerate}
For every $A$ and $C$, using $\tilde{O} \left( N / B \right)$ I/Os we can determine with probability at least $1-1/U$ if one of the two I/O bounds is significantly lower, i.e., distinguish between $N\sqrt{Z} / (B \sqrt{M}) >  2 N^2 / (MB)$ and $2 N\sqrt{Z} / (B \sqrt{M}) < N^2 / (MB)$.
\end{theorem}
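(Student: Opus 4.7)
The plan is to prove the three parts by reducing them to tools already introduced. For part (b), I would give a direct sparse blocking algorithm: sort the nonzero entries of $A$ and $C$ so that those sharing a common summation index $k$ are co-located, then load batches of $\Theta(M)$ such entries into internal memory, compute all resulting elementary products, and emit partial contributions. Since $\sum_k |A_{*k}|\,|C_{k*}| \le \nnz(A)\,\nnz(C) \le N^2/4$ by AM--GM, the total elementary-product work is $O(N^2)$, processed in $O(N^2/M)$ batches costing $O(M/B)$ I/Os each; an external sort of the partial sums then aggregates duplicates and drops cancellations within the same $O(N^2/(MB))$ budget.

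For the decision subclaim, the two inequalities reduce to
\[
\frac{N\sqrt Z}{B\sqrt M} > \frac{2N^2}{MB} \Longleftrightarrow Z > \frac{4N^2}{M}, \qquad \frac{2N\sqrt Z}{B\sqrt M} < \frac{N^2}{MB} \Longleftrightarrow Z < \frac{N^2}{4M},
\]
so the two thresholds on $Z$ differ by a factor of $16$, and any constant-factor approximation of $Z$ determines which case holds. Invoking the size estimator of \Cref{sec:sizest} with $\varepsilon = 1/4$ costs $\tilde O(\varepsilon^{-3} N/B) = \tilde O(N/B)$ I/Os, and $O(\log U)$ independent repetitions followed by a median amplify the success probability to $1 - 1/U$, as required.

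Part (a) is the main technical contribution. The plan is a recursive divide-and-conquer in the spirit of Williams--Yu, but with splits driven by the size estimator so that the two subproblems at every level output roughly equal numbers of nonzeros. After estimating $Z$, pick a split of the output (a row range of $A$ or a column range of $C$) that the estimator predicts halves the nonzero count, and recurse on the two induced subproducts restricted to the relevant rows/columns of $A$ and $C$. When the estimated output of a subproblem drops to a suitable $\Theta(M)$-sized threshold, solve it directly with a compressed matrix multiplication sketch, recovering the nonzeros using the equality check available in our (slightly extended) semiring I/O model; this is what allows the algorithm to avoid any ``no cancellation'' hypothesis. A charging argument combining the $O(\log U)$ recursion depth, the $\tilde O(N/B)$ cost of each level's size estimations and input scans, and the sketch cost at the leaves yields the claimed $\tilde O(N\sqrt Z/(B\sqrt M))$ bound.

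The main obstacle will be the probabilistic analysis of part (a): the splits are driven by noisy estimates that must remain accurate enough to keep subproblems balanced, and the leaf sketches must correctly recover every output nonzero despite possible cancellations. The intended remedy is to set each estimator and sketch invocation to fail with probability at most $U^{-c}$ for a sufficiently large constant $c$, then take a union bound over the $\poly(U)$ recursive calls so that the total failure probability stays below $1/U$, as claimed.
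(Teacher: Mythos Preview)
Your plan for the decision subclaim is exactly what the paper does, and your plan for part (a) is essentially the paper's argument: the ``coloring'' of \Cref{sec:color} is precisely a recursive balanced split driven by the size estimator, bottoming out in the compressed matrix multiplication of \Cref{sec:compressedmm}. One detail you will need and do not mention: a single row of $A$ (or column of $C$) may already generate more than half of the remaining outputs, so no row-range split can be balanced. The paper handles this by removing one ``splitting'' row/column at each split via a dense-vector~$\times$~sparse-matrix product, which fits in the $\tilde O(cN/B)$ budget; your writeup should include this.

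Your argument for part (b), however, does not achieve the claimed bound. Your own arithmetic gives it away: $O(N^2/M)$ batches at $O(M/B)$ I/Os each is $O(N^2/B)$, not $O(N^2/(MB))$. More fundamentally, a batch of $\Theta(M)$ entries sharing an index $k$ can produce $\Theta(M^2)$ elementary products going to $\Theta(M^2)$ distinct output cells, so ``emit partial contributions'' means writing up to $\Theta(N^2)$ partial sums to disk, and the external sort you invoke to aggregate them costs $\tilde O(N^2/B)$, a factor $\tilde\Theta(M)$ above the target. The key idea you are missing is to avoid ever materializing partial sums on disk: keep a group of rows of $A$ with $\Theta(M)$ nonzeros resident in memory and stream $C$ in column-major order, so that each output entry $[AC]_{i,j}$ is completely accumulated in memory and emitted exactly once before moving to the next column. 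Rows of $A$ that are individually too heavy to fit in such a group (more than $M/2$ nonzeros) are handled separately by vector--matrix multiplication; there are at most $2N/M$ of them, each costing $\tilde O(N/B)$. This is the paper's algorithm, and it is what makes the $M$ in the denominator appear.
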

The above theorem makes no assumptions about cancellation of terms. In particular, $\nnz(AC)$ can be smaller than the number of output entries that have nonzero elementary products.

Our second main contribution is a new lower bound on sparse matrix multiplication in the semiring I/O model.
\begin{theorem}\label{thm:lower}
	For all positive integers $N$ and $Z<N^2$ there exist matrices $A$ and $C$ with $\nnz(A),\nnz(C) \leq N$, $\nnz(AC) \leq Z$, such that computing $AC$ in the semiring I/O model requires $\Omega \left(\min \left( \frac{N^2}{MB},\frac{N\sqrt{Z}}{\sqrt{M} B} \right) \right)$ I/Os.
\end{theorem}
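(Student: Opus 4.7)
My plan is to adapt the classical Hong--Kung / Loomis--Whitney phase argument to the fact that outputs are \emph{emitted} rather than written to disk. The hard instance will be dense rectangular matrices: take $A \in R^{a \times b}$ and $C \in R^{b \times c}$ with $a = c = \lceil \sqrt{Z}\rceil$ and $b = \lfloor N/(2\sqrt{Z})\rfloor$, with all nonzero entries being the same element in a semiring admitting no cancellation (e.g., Boolean, or nonnegative reals with all entries equal to $1$). Then $\nnz(A) + \nnz(C) \leq N$ and $\nnz(AC) = ac \leq Z$, and because there is no cancellation and no subtraction, any correct algorithm must actually compute all $P \eqdef abc = \Theta(N\sqrt{Z})$ elementary products in order to form each output sum. (Rectangular matrices can then be zero-padded to square matrices without changing $\nnz(A), \nnz(C),$ or $\nnz(AC)$.)

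Partition the execution into $q$ phases of $M/B$ consecutive I/Os. For phase $t$, let $S_A^t$, $S_C^t$, $S_{AC}^t$ denote the sets of distinct $A$-entries, $C$-entries, and partial sums of $(AC)_{ij}$ that appear in internal memory at some moment during the phase. The I/O budget gives $|S_A^t|, |S_C^t| \leq 2M$, and Loomis--Whitney applied to the triples $(i,j,k)$ of elementary products performed in the phase yields $\Pi_t \leq \sqrt{|S_A^t|\,|S_C^t|\,|S_{AC}^t|} = O(M\sqrt{|S_{AC}^t|})$. The crux is to bound $\sum_t |S_{AC}^t|$ globally. A new $(i,j)$ entering $S_{AC}^t$ must either be loaded from disk (chargeable to an I/O, so at most $2M$ per phase) or created from scratch the first time an elementary product is added to an empty slot. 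I will argue that WLOG the algorithm is \emph{canonical}: no partial sum is discarded and later recreated from scratch---any such recreation can be replaced by a single write--read pair at the cost of $O(1)$ additional I/Os. Under canonicity each $(i,j)$ is ``freshly created'' at most once across the whole execution, contributing a global total of at most $Z$, and so $\sum_t |S_{AC}^t| \leq O(Mq + Z)$.

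Applying Cauchy--Schwarz to $\sum_t \sqrt{|S_{AC}^t|}$ then gives
\[
P \;=\; \sum_t \Pi_t \;\leq\; O\!\left(M\sqrt{q \cdot \textstyle\sum_t |S_{AC}^t|}\right) \;=\; O\!\left(M\sqrt{q(Mq + Z)}\right).
\]
Squaring and case-splitting, either $M^3 q^2 = \Omega(P^2)$ or $M^2 qZ = \Omega(P^2)$, so $q = \Omega(\min(P/M^{3/2},\, P^2/(M^2 Z)))$. Since the total number of I/Os equals $qM/B$ and $P = \Theta(N\sqrt{Z})$, this evaluates to $\Omega(\min(N\sqrt{Z}/(\sqrt{M}\,B),\, N^2/(MB)))$, exactly the claimed lower bound. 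The main obstacle I expect is the amortized bound on $\sum_t |S_{AC}^t|$: the standard Hong--Kung argument gives $|S_{AC}^t| = O(M)$ per phase for free because outputs must be written to disk, but in the emit model partial sums can come and go with no I/O at all, so the per-phase bound fails and the canonicity-WLOG swap argument must be carried out carefully. The remaining ingredients---integer rounding of $a, b, c$, and degeneracies such as $b < 1$ (which in fact cannot occur since $Z \leq N^2/4$ always, given $\nnz(A),\nnz(C) \leq N/2$)---are routine.
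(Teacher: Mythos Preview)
Your approach is correct in outline and reaches the right bound, but via a genuinely different route than the paper. The paper argues by splitting the $Z$ outputs into \emph{direct} ones (all $b=N/\sqrt{Z}$ elementary products computed within a single phase) and \emph{indirect} ones (some partial sum written to disk). Direct outputs are bounded by a packing argument: a phase can hold at most $2M/b$ full rows of $A$ and $2M/b$ full columns of $C$, hence at most $(2M/b)^2 = O(M^2 Z/N^2)$ direct outputs per phase, forcing $\Omega(N^2/(MB))$ I/Os if half the outputs are direct. Indirect outputs are handled by the lemma of Irony et al.\ that at most $O(M^{3/2})$ elementary products can be computed \emph{and stored} per phase, forcing $\Omega(N\sqrt{Z}/(\sqrt{M}B))$ I/Os if half the outputs are indirect. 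Your unified Loomis--Whitney plus Cauchy--Schwarz argument, pivoting on the amortized bound $\sum_t |S_{AC}^t| = O(Mq+Z)$, recovers both branches of the minimum in one stroke and is arguably cleaner; you also correctly identify that the emit model is what makes the per-phase bound on $|S_{AC}^t|$ fail.

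There is, however, a real gap in your justification of that amortized bound. Your canonicity reduction (``replace each recreation by a write--read pair'') costs $O(1)$ I/Os \emph{per recreation}, but nothing bounds the number of recreations: an adversarial algorithm can create a fresh singleton partial sum for every one of the $P=\Theta(N\sqrt{Z})$ elementary products and merge them later, so the transformation could add $\Theta(P)$ I/Os---far more than the lower bound you are trying to establish. Moreover, ``no partial sum is discarded and later recreated'' does not by itself imply ``each $(i,j)$ is freshly created at most once'': two partial sums for the same $(i,j)$ can both be created fresh and both be useful (merged later). The easy fix is to drop canonicity entirely and argue directly: under the standard ``no wasted work'' normalization (every stored semiring element eventually contributes to some emitted output), any elementary product computed in phase $t$ lives inside a partial sum for $(i,j)$ that, traced to the end of the phase, is either still in memory (at most $2M$ slots), written to disk during the phase (at most $M$ items), or already emitted ($e_t$ outputs in phase $t$). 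Hence $|S_{AC}^t|\le 3M+e_t$ outright, and summing gives $\sum_t |S_{AC}^t|\le 3Mq+Z$, which is exactly what your Cauchy--Schwarz step needs.
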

Since we can determine and run the algorithm satisfying the minimum complexity of the lower bound, our bounds are tight.

\medskip

{\bf Paper structure}. \Cref{sec:sizest} describes a new size estimation algorithm which we will use as a subprocedure for our sparse matrix multiplication algorithm. The new size estimation algorithm may be of independent interest since to the knowledge of the authors there are no published size estimation procedures that handle cancellation of terms. \Cref{sec:upper} first describes a simple output insensitive algorithm in \Cref{sec:outalg}, algorithm (b) of \Cref{thm:main1}. Then we describe how algorithm (a) of \Cref{thm:main1} works: In \Cref{sec:color} we describe how to divide the sparse matrix product into small enough subproblems (with respect to output size), and \Cref{sec:compressedmm} desribes how a version of Pagh's ``compressed matrix multiplication'' algorithm yields an I/O efficient algorithm for  subproblems with a small output. 
Finally in \Cref{sec:lowerbound} we show the new tight lower bound of \Cref{thm:lower}.

%!TEX root = main.tex

\section{Matrix output size estimation} \label{sec:sizest}
We present a method to estimate column/row sizes of a matrix product $AC$, represented as a sparse matrix. In particular, for a column $C_{*k}$ (or analogously row $A_{k*}$) we are interested in estimating the number of nonzeros $\nnz(A[C]_{*k})$ ($\nnz([A]_{k*}B))$. We note that there are no assumptions about (absence of) cancellation of terms in the following.
We show the existence of the following algorithm.

\begin{lemma}\label{lem:colest}
 Let $A\in R^{U \times U}$ and $C\in R^{U \times U}$ be matrices with entries from  semiring $R$, $N = \nnz(A)+\nnz(C)$ and let $0 < \varepsilon, \delta \leq 1$.
 We can compute estimates $z_1,\ldots,z_k$ using $\tilde{O}(\varepsilon^{-3}N/B)$ I/Os and $O(\varepsilon^{-3} N \log(U/\delta) \log U)$ RAM operations such that with probability at least $1 - \delta$ it holds that $(1-\varepsilon)\nnz([AC]_{*k}) \leq z_k \leq (1+\varepsilon)\nnz([AC]_{*k})$ for all $1 \leq k \leq U$.
\end{lemma}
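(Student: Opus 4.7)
The plan is to compute, for each column $k$, a short linear sketch of $[AC]_{*k}$ from which $\nnz([AC]_{*k})$ can be estimated up to factor $1\pm\varepsilon$, where the sketch is obtained without ever materializing $AC$. Concretely, I would construct a sketching matrix $S\in R^{T\times U}$ and compute $Y = SAC = (SA)C$ by two rounds of sparse vector/matrix products, one per row of $S$. By Brodal et al.'s algorithm each such row product costs $\tilde{O}(N/B)$ I/Os, so the total I/O cost is $\tilde{O}(T\cdot N/B)$, and I need $T=\tilde{O}(\varepsilon^{-3})$.

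For the sketch I would use a standard leveled $F_0$ construction. For each level $\ell\in\{0,1,\ldots,\lceil\log U\rceil\}$ and each repetition $j\in[r]$ with $r=O(\varepsilon^{-3}\log(U/\delta))$, draw an independent hash $h_{\ell,j}:[U]\to\{0,1\}$ with $\Pr[h_{\ell,j}(i)=1]=2^{-\ell}$ and an independent sign $\sigma_{\ell,j}:[U]\to\{-1,+1\}$. Row $(\ell,j)$ of $S$ is $x^{(\ell,j)}_i = \sigma_{\ell,j}(i)\cdot h_{\ell,j}(i)$, so $Y_{(\ell,j),k} = \sum_i \sigma_{\ell,j}(i)\,h_{\ell,j}(i)\,[AC]_{i,k}$. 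Each $x^{(\ell,j)}$ has expected sparsity $U/2^\ell$, and the two-step product $(x^{(\ell,j)}A)C$ fits in $\tilde{O}(N/B)$ I/Os regardless of $\ell$.

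Given these sketch values, I would estimate $\nnz([AC]_{*k})$ level by level: let $\hat q_\ell$ be the fraction of repetitions $j$ with $Y_{(\ell,j),k}\neq 0$, and locate the critical level $\ell^*$ at which $\hat q_{\ell^*}$ crosses a fixed threshold such as $1/2$. Then $2^{\ell^*}$ approximates $\nnz([AC]_{*k})$ up to a constant; interpolating between adjacent levels using $\hat q_{\ell^*}$ and $\hat q_{\ell^*+1}$ and applying a Chernoff bound to the $r$ repetitions (which gives multiplicative error $\varepsilon$ on each $\hat q_\ell$ with failure probability at most $\delta/U^2$) upgrades this to a $1\pm\varepsilon$ estimate per column. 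A union bound over the $U$ columns yields overall failure probability at most $\delta$. The required $\tilde{O}(\varepsilon^{-3})$ repetitions arise because concentration of $\hat q_\ell$ around $q_\ell$ with relative error $\varepsilon$ demands $\Theta(\varepsilon^{-2}/q_\ell)$ samples, and $q_{\ell^*}$ can be as small as $\Theta(\varepsilon)$.

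The main obstacle is handling \emph{cancellation}. Ordinary $F_0$ estimators assume that $\sum_i x_i v_i = 0$ occurs only when no nonzero $v_i$ is touched by the sample, but over a semiring we must separately worry about (i) $[AC]_{i,k}$ being zero even though some of its elementary products are not, and (ii) the random signs accidentally cancelling nonzero coordinates. Difficulty (i) is automatically finessed because we compute $Y_{(\ell,j),k}$ as a genuine semiring expression via $(x^{(\ell,j)}A)C$, so rows $i$ with $[AC]_{i,k}=0$ contribute $0$ regardless of their elementary products; this is where computing in the order $(SA)C$ rather than $S(AC)$ is essential, and where we use the equality-check extension of the semiring I/O model. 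For (ii), a standard symmetrization argument gives $\Pr[\sum_i \sigma_i v_i = 0 \mid \text{at least one nonzero } v_i \text{ sampled}] \leq 1/2$, so averaging over the $r$ independent repetitions recovers an $\varepsilon$-accurate estimate of the true level probability $q_\ell$, from which $\nnz([AC]_{*k})$ follows by the calibration above.
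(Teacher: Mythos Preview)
Your high-level plan is exactly the paper's: form a linear $F_0$-type sketch matrix $F$, compute $v=(FA)C$ by $O(\text{rows of }F)$ many dense-vector/sparse-matrix products at $\tilde O(N/B)$ I/Os each (using the algorithm of Bender--Brodal--Fagerberg--Jacob--Vicari), and read off an estimate of $\nnz([AC]_{*k})$ from column~$k$ of~$v$. The paper differs only in how the sketch is instantiated: instead of a hand-built subsampling-plus-signs construction, it invokes a black-box linear $F_0$ \emph{distinguishability} sketch of dimension $d=O(\varepsilon^{-2}\log(U/\delta))$ that, for a fixed threshold $T$, decides $F_0>(1+\varepsilon)T$ versus $F_0<(1-\varepsilon)T$, and then repeats for $O(\varepsilon^{-1}\log U)$ geometrically spaced thresholds $T=1,(1+\varepsilon),(1+\varepsilon)^2,\ldots,O(U)$. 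The $\varepsilon^{-3}$ factor is then simply $\varepsilon^{-2}\cdot\varepsilon^{-1}$.

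One step in your instantiation is loose. The bound $\Pr\bigl[\sum_i\sigma_i v_i=0\;\big|\;\text{some nonzero }v_i\text{ sampled}\bigr]\le 1/2$ is correct, but it is not enough to ``recover an $\varepsilon$-accurate estimate of the true level probability $q_\ell$'': the sign-cancellation probability is data-dependent and can sit anywhere in $[0,1/2]$, so $\mathbb E[\hat q_\ell]$ lies anywhere in $[q_\ell/2,q_\ell]$, and averaging only pins down $q_\ell$ (hence $F_0$) to a constant factor at your threshold-crossing level $\ell^*$. What actually justifies the $\varepsilon^{-3}$ you quote is to work instead at a \emph{sparser} level where $F_0\cdot 2^{-\ell}=\Theta(\varepsilon)$: there, with probability $1-O(\varepsilon)$ at most one nonzero coordinate is sampled, no sign collision can occur, and $\mathbb E[\hat q_\ell]=F_0\cdot 2^{-\ell}\,(1\pm O(\varepsilon))$ is a function of $F_0$ alone --- but now $q_\ell=\Theta(\varepsilon)$, and Chernoff with relative error $\varepsilon$ indeed needs $\Theta(\varepsilon^{-2}/q_\ell)=\Theta(\varepsilon^{-3})$ repetitions. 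Your write-up conflates this sparse level with the threshold-crossing level where $q\approx 1/2$. (Also note that $\pm1$ signs presuppose additive inverses, so your sketch lives over a ring, not a bare semiring; the paper's black-box formulation has the same implicit requirement once cancellation is in play.) The paper's threshold-sweep sidesteps all of this bookkeeping.
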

We note that \Cref{lem:colest} by symmetry can give the same guarantees for rows of the matrix product, which is done analogously by applying the algorithm to the product $(AC)^T = C^TA^T$.
Further, from \Cref{lem:colest} we have, following from combining of all column estimates, an estimate of $\nnz(AC)$.
\begin{corollary}\label{lem:nnzest}
 Let $A\in R^{U \times U}$ and $C\in R^{U \times U}$ be matrices with entries from  semiring $R$, $N = \nnz(A)+\nnz(C)$ and let $0 < \varepsilon, \delta \leq 1$.
We can compute $\hat{Z}$ in $\tilde{O}(\varepsilon^{-3}N/B)$ I/Os and $O(\varepsilon^{-3} N \log(U/\delta) \log U)$ RAM operations such that with probability at least $1 - \delta$ it holds that $(1-\varepsilon)\nnz(AC) \leq \hat{Z} \leq (1+\varepsilon)\nnz(AC)$.
\end{corollary}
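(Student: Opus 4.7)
The plan is to obtain $\hat{Z}$ by simply summing the column estimates supplied by \Cref{lem:colest} and argue that relative error is preserved under summation. First I would invoke \Cref{lem:colest} with the same parameters $\varepsilon$ and $\delta$ to obtain estimates $z_1,\ldots,z_U$ satisfying, with probability at least $1-\delta$, the simultaneous guarantee $(1-\varepsilon)\nnz([AC]_{*k}) \leq z_k \leq (1+\varepsilon)\nnz([AC]_{*k})$ for every $k \in [U]$. Note that the ``for all $k$'' quantifier is already inside the high-probability event delivered by \Cref{lem:colest}, so no additional union bound is required.

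Next I would define $\hat{Z} \eqdef \sum_{k=1}^{U} z_k$ and observe that $\nnz(AC) = \sum_{k=1}^{U} \nnz([AC]_{*k})$, since the nonzero entries of $AC$ are partitioned across its columns. Summing the per-column inequalities yields
\[
(1-\varepsilon)\nnz(AC) \;=\; (1-\varepsilon)\sum_k \nnz([AC]_{*k}) \;\leq\; \sum_k z_k \;=\; \hat{Z} \;\leq\; (1+\varepsilon)\sum_k \nnz([AC]_{*k}) \;=\; (1+\varepsilon)\nnz(AC),
\]
which is exactly the guarantee claimed for $\hat{Z}$.

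For the complexity, the cost of invoking \Cref{lem:colest} is $\tilde{O}(\varepsilon^{-3}N/B)$ I/Os and $O(\varepsilon^{-3}N \log(U/\delta) \log U)$ RAM operations. Aggregating the $U$ values into $\hat{Z}$ costs only $O(U/B)$ additional I/Os and $O(U)$ RAM operations (scanning a single list of $U$ numbers), which is absorbed into the stated bounds since $U \leq N$ and $N/B$ dominates $U/B$. Hence both the I/O and RAM complexities of \Cref{lem:colest} carry over verbatim to the corollary.

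There is no real obstacle here: the corollary is a direct consequence of the ``for all columns simultaneously'' form of \Cref{lem:colest} combined with the fact that relative error is preserved under taking sums of nonnegative quantities. The only subtlety worth double-checking is that \Cref{lem:colest} truly gives a \emph{joint} guarantee over all columns within a single probability-$(1-\delta)$ event; if instead it were a per-column guarantee, one would need to rerun it with failure parameter $\delta/U$, which would add only an additional $\log U$ factor to the RAM term and leave the $\tilde{O}$ I/O bound unchanged.
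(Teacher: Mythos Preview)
Your proposal is correct and matches the paper's approach exactly: the paper explicitly states just before the corollary that it follows ``from combining of all column estimates,'' and its proof (which is really the proof of \Cref{lem:colest}) concludes with the one-line remark that the per-column estimates yield ``the desired estimate of the total number of nonzeroes.'' Your observation that relative error is preserved under summing nonnegative quantities, together with the joint (for-all-$k$) guarantee already built into \Cref{lem:colest}, is precisely the intended argument.
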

At a high level, the algorithm is similar in spirit to Cohen~\cite{cohen:1998,Cohen:1994:EST:1398518.1399012}, but uses linear $F_0$ sketches (see e.g.~\cite{12823057,Kane:2010:OAD:1807085.1807094}) that serve the purpose of capturing cancellation of terms.

We will make use of a well-known $F_0$-sketching method~\cite{12823057,mcgregorbook}, where $F_0(f)$ denotes the number of non-zero entries in a vector $f$.
Let $S$ be a data stream of items of the form $((i,j),r)$, where $(i,j)\in U\times U$ and $r\in R$.
The stream defines a vector indexed by $U\times U$ (which can also be thought of as a matrix), where entry $(i,j)$ is the sum of all ring elements $r$ that occurred with index $(i,j)$ in the stream. 

For a matrix $S \in R^{U \times U}$ the number of distinct indices
is the sum of distinct indices over all column vectors $F_0(S) = \sum_{i \in [U]} F_0(S_{i*})$.
One can compute in space $O( \varepsilon^{-3} \log n \log \delta^{-1} )$~\cite{12823057,mcgregorbook} a \emph{linear} sketch over $x$ that can output a number $\hat{z}$, where $(1-\varepsilon)F_0 \leq \hat{z} \leq (1+\varepsilon)F_0$ with constant probability.

\medskip

\paragraph{High-level algorithm description.} We compute a linear sketch $F$ followed by the matrix product $v = FAC$. From $v$ for a given $T$ we can distinguish between a column having more than $(1+\varepsilon)T$ and less than $(1-\varepsilon)T$ nonzero entries - we repeat this procedure for suitable values of $T$ to achieve the final estimate. We use the following distinguishability result:
\begin{fact}\label[fact]{fact:distsketch}(\cite{mcgregorbook}, Section 2.1)
There exists a projection matrix $M \in \{0,1\}^{n \times d}$ such that for each frequency vector $f \in R^{1 \times n}$ we can be estimate $F_0(f)$ from $fM$. In particular, for fixed $T' > 0$, $0 < \varepsilon', \delta' \leq 1$ with probability $1- \delta'$ we can distinguish the cases
$F_0(f) > (1+\varepsilon')T'$ and $F_0(f) < (1-\varepsilon')T'$ using space $d = O( \varepsilon'^{-2} \log \delta'^{-1})$.
\end{fact}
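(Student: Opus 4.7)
The plan is to construct $M$ as the horizontal concatenation of $\Theta(\log(1/\delta'))$ independent blocks, each a matrix in $\{0,1\}^{n\times b}$ with $b = O(\varepsilon'^{-2})$, obtained by composing a pairwise-independent subsampling step with a pairwise-independent bucket hash. The crucial semiring observation is that a sum of nonzero semiring elements is itself nonzero (no additive cancellation), so for each block $M^{(k)}$ the entry $(fM^{(k)})_j$ is the sum of those $f_i$ that are subsampled and sent to bucket $j$, and it is nonzero \emph{iff} at least one such $f_i$ is nonzero. Thus the number of non-empty buckets in each block is read off directly from the support of $fM$, and it is the only statistic the estimator will look at.

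For one block, let $\sigma:[n]\to\{0,1\}$ be a pairwise-independent indicator with $\Pr[\sigma(i)=1] = p := \min(1,\, c_1/(T'\varepsilon'^2))$, and let $h:[n]\to[b]$ be an independently drawn pairwise-independent hash into $b = c_2\varepsilon'^{-2}$ buckets. Put $M^{(k)}_{i,j} = 1$ iff $\sigma(i) = 1$ and $h(i) = j$. Let $X = |\{i : \sigma(i)=1,\ f_i \ne 0\}|$ be the number of surviving nonzero coordinates and $Y = |\{j : (fM^{(k)})_j \ne 0\}|$ the number of non-empty buckets. A standard balls-in-bins computation yields $\E[X] = pF_0(f)$ with $\Var[X]\le pF_0(f)$, and conditional on $X$ the pairwise independence of $h$ gives $\E[Y\mid X] = b(1-(1-1/b)^X) + O(X^2/b^2)$ with variance $O(X)$. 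In the regime $F_0(f) = \Theta(T')$, two applications of Chebyshev show $|X - pF_0(f)| \le \varepsilon' pF_0(f)$ and $|Y - \E[Y\mid X]| \le O(\varepsilon' X)$ each with probability at least $9/10$. Since the map $X \mapsto \E[Y\mid X]$ is strictly increasing with nonvanishing derivative at $X = \Theta(b)$, inverting it produces a $(1\pm O(\varepsilon'))$-estimate $\hat X$ of $X$, and hence $\hat F_0 := \hat X / p$ is a $(1\pm O(\varepsilon'))$-estimate of $F_0(f)$.

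Comparing $\hat F_0$ to $T'$ distinguishes the two cases $F_0(f) > (1+\varepsilon')T'$ and $F_0(f) < (1-\varepsilon')T'$ with constant probability after a harmless constant rescaling of $\varepsilon'$. Running $\Theta(\log(1/\delta'))$ independent blocks in parallel (stacking them into the columns of $M$) and taking the majority decision boosts this success probability to $1-\delta'$ via Chernoff, giving total width $d = O(\varepsilon'^{-2}\log(1/\delta'))$. The main technical obstacle is the careful bookkeeping of error contributions from the two hash families: each stage introduces only $O(\varepsilon')$ relative slack, but the two stages compose multiplicatively and so require a constant rescaling of the target $\varepsilon'$. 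Everything else is a textbook pairwise-independent $F_0$ sketch; the only semiring-specific content is the use of no-additive-cancellation to equate ``$(fM)_j \ne 0$'' with ``bucket $j$ is non-empty'', which is what allows a purely $\{0,1\}$-valued linear sketch to report $F_0(f)$ over an arbitrary zero-sum-free semiring.
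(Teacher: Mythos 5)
The paper does not actually prove this statement---it is imported verbatim from the cited survey---so there is no in-paper argument to compare against. Your construction (pairwise-independent subsampling at rate $\Theta(1/(T'\varepsilon'^2))$ composed with hashing into $b=\Theta(\varepsilon'^{-2})$ buckets, reading off the number of non-empty buckets, inverting the occupancy curve, and taking a majority over $\Theta(\log(1/\delta'))$ independent blocks) is the standard distinguisher that citation refers to, and the space accounting $d=O(\varepsilon'^{-2}\log\delta'^{-1})$ comes out right.

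The genuine gap is precisely the step you flag as the ``crucial semiring observation'': that a sum of nonzero semiring elements is nonzero. This holds only for \emph{zero-sum-free} semirings, and that restriction is not available where the paper uses the Fact. \Cref{lem:colest} is stated for an arbitrary semiring $R$ (every ring, e.g.\ $\R$, is a semiring, and the lower bound section explicitly works over infinite fields), the vectors being sketched are columns of $AC$ whose entries are themselves sums of elementary products, and the entire point of \Cref{sec:sizest} is to tolerate cancellation. Over such an $R$ a $\{0,1\}$-valued $M$ cannot equate ``$(fM)_j\ne 0$'' with ``bucket $j$ is non-empty'': take $f$ with $m$ entries equal to $1$ and $m$ entries equal to $-1$. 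After subsampling you have $X=\Theta(\varepsilon'^{-2})$ survivors in $b=\Theta(\varepsilon'^{-2})$ buckets, so the expected number of buckets containing exactly one $+1$ and one $-1$ is $\Theta(X^2/b)=\Theta(b)$; all of these read as zero, a constant fraction of occupied buckets is misclassified as empty, and the inverted estimate of $X$ is off by a constant factor rather than $O(\varepsilon')$. The repair is to draw the nonzero entries of $M$ from a random coefficient set in a suitable field rather than from $\{0,1\}$ (as in the Kane--Nelson--Woodruff-style $\ell_0$ sketches the paper also cites), so that an occupied bucket evaluates to zero only with small probability; this preserves linearity and the space bound but is no longer the projection matrix promised by the Fact as stated. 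Two smaller issues: the identity $\E[Y\mid X]=b(1-(1-1/b)^X)$ requires fully independent $h$---under pairwise independence you only have the Bonferroni bounds $X-\binom{X}{2}/b\le\E[Y\mid X]\le X$, which suffice but change the inversion step---and the claim $\Var[Y\mid X]=O(X)$ needs more than pairwise independence of $h$ (or a different estimator) to justify.
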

\noindent
We will apply this distinguishability sketch to the columns of the product $AC$, since $F_0(AC) > (1+\varepsilon)T$ implies $\nnz(AC) > (1+\varepsilon)T$ and analogously for the second case.
This follows trivially from the definition of $F_0$ and the number of nonzeroes in a matrix product. From \Cref{fact:distsketch} we have a sketch $F \in \{0,1\}^{d \times U}$ which multiplied with a matrix $S\in R^{U \times U}$ we can for the columns $[FS]_{*k}$ distinguish $\nnz(S_{*k}) > (1+\varepsilon)T$ from $\nnz(S_{*k}) < (1-\varepsilon)T$ with probability $1-\delta$.

\begin{proof}(\Cref{lem:nnzest})
Let $F \in \{0,1\}^{d \times U}$ be a $F_0$-distinguishability sketch as described in \Cref{fact:distsketch}.
To ensure that for every of the $U$ columns in $v = FAC$ we can distinguish the two cases with probability at least $1 - \delta$ it is sufficient to invoke the algorithm from \Cref{fact:distsketch} with $\delta' = \delta/U$. By the union bound over the error probabilities we have $\sum_{1 \leq i \leq U} \delta/U = \delta$. By linearity of $F$ we have that from $v$ we can for all columns $k \in[U]$ distinguish the cases $[AC]_{*k} < (1-\varepsilon)T$ and $[AC]_{*k} > (1+\varepsilon)T$.

Also by linearity, the order of operations in the computation of $v$ is $v = (FA)C$, hence the computation of $v$ can be seen as $2d$ dense-vector sparse-matrix multiplications.

For dense vector $y^{1 \times U}$ and sparse matrix $S\in R^{U \times U}$ we can compute $yS$ in $O( (\nnz(S)/B) \log_{M/B}(U/M) )$ I/Os~\cite{brodal:spmv}. Letting $N = \nnz(A) + \nnz(C)$ computing $v$ for a value of $T$ has I/O complexity
\begin{align}
O(2d (N/B) \log_{M/B} (U/M) &= O(\varepsilon^{-2} (N/B) \log_{M/B} (U/M) \log (U/\delta) \nonumber\\
&= \tilde{O}( \varepsilon^{-2} N/B)\label{eq:dist1}.
\end{align}
We note that for sparse matrices this bound is $\tilde{O}(\sort(U))$.
\noindent
Analogously, the number of RAM operations needed to compute $v$ for a specific $T$ is $O( \varepsilon^{-2} N \log(U / \delta))$.

Since for a given $T$ we can now using I/Os given in (\ref{eq:dist1}) distinguish $[AC]_{*k} < (1-\varepsilon)T$ and $[AC]_{*k} > (1+\varepsilon)T$ we simply repeat this procedure for $O(\varepsilon^{-1} \log U)$ values $T = 1, (1+\varepsilon), (1+\varepsilon)^2, \ldots, O(U)$, which yields a $1\pm \varepsilon$ estimate of the number of nonzeroes in each column, from which we have the desired estimate
of the total number of nonzeroes.
\qed
\end{proof}
We note that the algorithm of \Cref{lem:nnzest} can be obtained using any linear $F_0$ sketch in I/O complexity $O( \xi (N/B) \log_{M/B} (U/M))$, where $\xi$ is the space complexity of the sketch used.
%!TEX root = main.tex

\section{Cache-aware upper bound}\label{sec:upper}

As in the previous section let $A\in R^{U \times U}$ and $C\in R^{U \times U}$ be matrices with entries from a semiring $R$, and let $N = \nnz(A) + \nnz(C)$ be the input size.

\subsection{Output insensitive algorithm}\label{sec:outalg}

We first describe algorithm (a) of \Cref{thm:main1}, which is insensitive to the number of output entries $\nnz(AC)$.
It works as follows:
First put the entries of $C$ in column-major order by lexicographic sorting.
For every row $a_i$ of $A$ with more than $M/2$ nonzeros, compute the vector-matrix product $a_i C$ in time $\tilde{O}(N/B)$ using the algorithm of~\cite{brodal:spmv}.
There can be at most $2N/M$ such rows, so the total time spent on this is $\tilde{O}(N^2/(MB))$.
The remaining rows of $A$ are then gathered in groups with between $M/2$ and $M$ nonzero entries per group.
In a single scan of $C$ (using column-major order) we can compute the product of each such row with the matrix $C$.
The number of I/Os is $O(N/B)$ for each of the at most $2N/M$ groups, so the total complexity is $\tilde{O}(N^2/(MB))$.

\subsection{Monte Carlo algorithm overview}

We next describe algorithm (b) of \Cref{thm:main1}
The algorithm works by first performing a step of \emph{color coding}, the purpose of which is to split the matrix product into submatrices, each of which can be computed efficiently.
Roughly, the idea is to color the rows of $A$ and columns of $C$, forming submatrices $A_1,A_2,\dots$ and $C_1,C_2,\dots$ corresponding to each color, such that every matrix product $A_i C_j$ has roughly $M / \log U$ nonzero elements.
Then, a ``compressed'' matrix multiplication algorithm (described by \Cref{lem:compressedmm}) is used to compute every product $A_i C_j$ by a single scan over the matrices $A_i$ and $C_j$.
The number $c$ of colors needed to achieve this, to be specified later, depends on an estimate of $\nnz(AC)$, found using \Cref{lem:nnzest}.
It turns out to simplify the algorithm if we deviate slightly from the above coloring, by using different colorings of the rows of $C$ for each $A_i$.
That is, we will work with $c$ different decompositions $C^{(i)}_1,C^{(i)}_2,\dots$ of $C$, and compute products of the form $A_i C^{(i)}_j$.
Also, there might be rows of $A$ and columns of $C$ that we cannot color because they generate too many entries in the output.
However, it turns out that we can afford to handle such rows/columns in a direct way using vector-matrix multiplication.

%!TEX root = main.tex

\subsection{Compressed matrix multiplication in the I/O model}\label{sec:compressedmm}

Let $\gamma > 0$ be a suitably small constant, and define $r = 4 \gamma M / \log U$.
We now describe an I/O-efficient algorithm for matrix products $AC$ with $\nnz(AC) \leq \gamma M / \log U = r/4$ nonzeros.
If $A$ is stored in column-major order and $C$ is stored in row-major order, the algorithm makes just a single scan over the matrices.

The algorithm is a variation of the one found in~\cite{Pagh:2013:CMM:2493252.2493254}, adapted to the semigroup I/O model.
Specifically, for some constant $\ell$ 
and $t=1,\dots,\ell\log U$ let $h_t,h'_t: [U]\rightarrow [r]$ be pairwise independent hash functions.
% Expected fraction corrupted is 1/4, probability of exceeding 1/2 is (e/4)^
The algorithm computes the following $\ell\log U$ polynomials of degree at most $2r$:
$$p_t(x) = \sum_{k=1}^U \left(\sum_{i=1}^U A_{i,k} x^{h_t(i)}\right) \left(\sum_{j=1}^U C_{k,j} x^{h'_t(j)}\right) \enspace .$$
It is not hard to see that the polynomial $\sum_{i=1}^U A_{i,k} x^{h_t(i)}$ can be computed in a single scan over column $i$ of $A$, using space $r$.
Similarly, we can compute the polynomial $\sum_{j=1}^U C_{k,j} x^{h'_t(j)}$ in space $r$ by scanning row $j$ of $C$.
As soon as both polynomials have been computed, we multiply them and add the result to the sum of products that will eventually be equal to $p_t(x)$.
This requires additional space $2r$, for a total space usage of $4r$.

Though a computationally less expensive approach is described in~\cite{Pagh:2013:CMM:2493252.2493254}, we present a simple method that (without using any I/Os) uses the polynomials $p_t(x)$, $t=1,\dots,\ell\log U$, to compute the set of entries in $AC$ with probability $1-U^{-3}$.
For every $i$ and $j$, to compute the value of $[AC]_{i,j}$ consider the coefficient of $x^{h_t(i)+h'_t(j)}$ in $p_t$, for $t=1,\dots,\ell\log U$.
For suitably chosen $c$, with probability $1-U^{-5}$ the value $[AC]_{i,j}$ is found in the majority of these coefficients.
The majority coefficient can be computed using just equality checks among semigroup elements~\cite{Boyer81}.
The analysis in~\cite{Pagh:2013:CMM:2493252.2493254} gives us, for a suitable choice of $\gamma$ and $\ell$, the following:
\begin{lemma}\label{lem:compressedmm}
	Suppose matrix $A$ is stored in column-major order, and $C$ is stored in row-major order.
	There exists an algorithm in the semiring I/O model augmented with equality test, and an absolute constant $\gamma > 0$, such that if
	$\nnz(AC) < \gamma M / \log U$
    the algorithm outputs the nonzero entries of $AC$ with probability $1-U^{-3}$, using just a single scan over the input matrices.
\end{lemma}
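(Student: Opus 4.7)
The algorithm is essentially already described in the text, so the plan is to verify two things: that it really does fit in one scan, and that the Boyer--Moore majority-of-coefficients reconstruction succeeds with probability at least $1 - U^{-3}$. First I would spell out the execution schedule that realises a single scan: for each $k = 1, \ldots, U$, read column $k$ of $A$ (from the column-major layout) in parallel with row $k$ of $C$ (from the row-major layout), and for every $t \in \{1, \ldots, \ell \log U\}$ accumulate into the running polynomial $p_t$ the product of the two length-$r$ vectors $\sum_i A_{i,k} x^{h_t(i)}$ and $\sum_j C_{k,j} x^{h'_t(j)}$. The working set per $t$ consists of these two vectors, the degree-$(2r-2)$ product, and the running accumulator of $p_t$, totalling $O(r)$ cells; across all $t$ this is $O(r \ell \log U) = O(\gamma \ell M)$ words. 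Provided $\gamma \ell$ is a small enough constant this fits inside the internal memory of size $M$ together with the (polylogarithmic) descriptions of the pairwise-independent hash functions, so the entire computation costs one scan of $A$ and one scan of $C$ and no further I/Os; the nonzero entries are then emitted from internal memory once all the $p_t$ have been fully formed.

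Next I would handle the per-trial correctness analysis. Expanding the definition gives
\[
p_t(x) \;=\; \sum_{i,j} [AC]_{i,j}\, x^{h_t(i) + h'_t(j)},
\]
so the coefficient of $x^{h_t(i) + h'_t(j)}$ in $p_t$ equals $[AC]_{i,j}$ plus the sum of $[AC]_{i',j'}$ over all $(i',j') \neq (i,j)$ whose hash pair coincides. Using the pairwise independence of $h_t$ and of $h'_t$ separately, together with the independence between $h_t$ and $h'_t$, a short case split on whether $i=i'$, $j=j'$, or both differ shows that any fixed other pair collides with probability exactly $1/r$. Summing over the at most $\nnz(AC) - 1 < \gamma M / \log U = r/4$ other nonzero entries bounds the expected number of collisions by $1/4$, and Markov's inequality then shows that with probability at least $3/4$ the coefficient equals $[AC]_{i,j}$ exactly; this uniform statement covers both the case $[AC]_{i,j} \neq 0$ and the case $[AC]_{i,j} = 0$, where a zero coefficient is the desired outcome, so no separate ``no cancellation'' assumption is needed.

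Finally I would amplify and union-bound. Each of the $\ell \log U$ trials independently produces the correct value of $[AC]_{i,j}$ with probability at least $3/4$, so a Chernoff bound yields failure probability at most $U^{-5}$ per position once $\ell$ is a sufficiently large constant. Whenever more than half of the trials succeed, the Boyer--Moore majority procedure --- which uses only equality tests between semiring elements --- returns the common value $[AC]_{i,j}$, and a union bound over the $U^2$ ordered pairs $(i,j)$ bounds the total failure probability by $U^{-3}$. On that event the emitted set is exactly the set of nonzero entries of $AC$. The main obstacle, and the reason $\gamma$ must be an absolute constant in the statement, is balancing the two constants against each other: $\gamma$ must be small enough that $O(r \ell \log U) \leq M$ while $\ell$ must simultaneously be large enough for Chernoff to beat $U^{-5}$. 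Fixing $\ell$ first from the Chernoff calculation and then $\gamma$ from the memory budget closes the loop without circularity.
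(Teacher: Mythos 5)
Your proposal is correct and follows the same algorithm the paper describes (per-trial hashing into degree-$O(r)$ polynomials accumulated during one synchronized scan of $A$'s columns and $C$'s rows, followed by Boyer--Moore majority decoding with equality tests); the paper simply delegates the collision/Markov/Chernoff analysis to the cited compressed-matrix-multiplication paper, whereas you reproduce it explicitly, including the correct order of fixing $\ell$ before $\gamma$. The only nitpick is that for two pairs differing in both coordinates the collision probability under integer addition of hash values is at most (not exactly) $1/r$, which does not affect the bound.
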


\subsection{Computing a balanced coloring}\label{sec:color}

We wish to assign every row $A_{k*}$ and column $C_{*k}$ a color from $[c]$. Let color set $S_i$ contain rows $A_{k*}$ that are assigned color $i$ and for such a color $i$ assigned to rows of of$A$ let color set $S^{(i)}_j$ contain columns $C_{*k}$ that are assigned color $j$.
Also, let $A|S_i$ be the input matrix $A$ restricted to contain only elements in rows from $S_i$ (and analogously for $C$ and $S^{(i)}_j$).

The goal of the coloring step is to assign the colors such that for every pair of color sets $(S_i, S^{(i)}_j)$, $1 \leq i,j \leq c$ it holds that $\nnz((A|S_i)(C|S^{(i)}_j)) < \gamma M / \log U$. This can be seen as coloring the rows of $A$ once and the columns of $C$ $c$ times.

\begin{lemma}\label{lem:color}
Let $A\in R^{U \times U}$ and $C\in R^{U \times U}$ be matrices with $N = \nnz(A)+\nnz(C)$ nonzero entries.\\
Using $\tilde{O}\left(\frac{N\sqrt{\nnz(AC)}}{B\sqrt{M}}\right)$ I/Os a coloring with $c = \sqrt{\frac{\nnz(AC) \log U}{M}} + O(1)$ colors can be computed that assigns a
color to rows of $A$ and for each such color $i$, assigns colors to columns of $C$ such that:
\begin{enumerate}
	\item For every $i,j \in [c]$ it holds that $\nnz \left( (A|S_i)(C|S^{(i)}_j) \right) < M/\log U$.
	\item Rows from $A$ and columns form $C$ that are not in some color sets $S_i$ and $S^{(i)}_j$ has had their nonzero output entries emitted.
\end{enumerate}
\end{lemma}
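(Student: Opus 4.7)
The plan is to use the size-estimation routines of \Cref{sec:sizest} to identify ``heavy'' rows and columns whose contribution to $AC$ is too large to be partitioned, to handle those directly via vector-matrix products, and to color the rest so that each subproduct fits in memory. First I would use \Cref{lem:nnzest} (with constant $\varepsilon$) to compute $\hat Z = \Theta(\nnz(AC))$, set $c = \Theta(\sqrt{\hat Z \log U/M})$ and fix a heavy-row threshold $T_r = c M/\log^2 U$; in parallel, apply \Cref{lem:colest} to $C^TA^T$ to obtain row-size estimates $\hat v_r = \Theta(\nnz(A_{r*}C))$ for every row $r$ of $A$. All of this costs $\tilde O(N/B)$ I/Os. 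Call a row $r$ \emph{heavy} if $\hat v_r > T_r$; since $\sum_r \nnz(A_{r*}C) = Z$, the number of heavy rows is $O(Z/T_r) = O(c \log U) = \tilde O(c)$. Each is handled directly by using~\cite{brodal:spmv} to compute $A_{r*}C$ and emit its nonzeros at $\tilde O(N/B)$ I/Os per row; the total is $\tilde O(cN/B)$.

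Next, I would assign each surviving (non-heavy) row of $A$ an i.i.d.\ uniform color in $[c]$, forming classes $S_1,\dots,S_c$. Let $V_i := \nnz((A|S_i)C) = \sum_{r\in S_i}v_r$. Since each $v_r \le T_r = (Z/c)/\log U$ for non-heavy rows, a weighted Chernoff bound yields $V_i \le 2Z/c$ with probability $\ge 1 - U^{-\Omega(1)}$ per class; a union bound over the $c$ classes maintains probability $1 - 1/\poly(U)$ simultaneously.

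For each row-color $i$, invoke \Cref{lem:colest} on $(A|S_i)C$ to estimate $\hat u_{ik} = \Theta(u_{ik})$ where $u_{ik} := \nnz((A|S_i)C_{*k})$, at cost $\tilde O((\nnz(A|S_i)+\nnz(C))/B)$; summing over $i$ totals $\tilde O(cN/B)$. Call a column $k$ \emph{$i$-heavy} if $\hat u_{ik} > M/(4\log U)$; there are at most $O(V_i\log U/M) = O(c)$ such columns per $i$, and each is handled directly by computing $(A|S_i)C_{*k}$ via~\cite{brodal:spmv} and emitting its nonzeros. The aggregate cost splits into an $A$-term, $\sum_i O(c)\cdot \tilde O(\nnz(A|S_i)/B) = \tilde O(cN/B)$ using $\sum_i \nnz(A|S_i) \le N$, and a $C$-term, $\tilde O(cN/B)$ since each column can be $i$-heavy for at most $c$ distinct values of $i$. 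The remaining columns have $\hat u_{ik} \le M/(4\log U)$ and total mass $\le V_i = O(cM/\log U)$, so a first-fit-decreasing bin packing with bin capacity $M/\log U$ produces $c + O(1)$ bins, yielding the classes $S^{(i)}_1,\dots,S^{(i)}_c$.

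The central obstacle is threading three competing requirements through the same $c$: (i) Chernoff concentration of the random row coloring requires the per-row weight bound $T_r \le (Z/c)/\log U$; (ii) direct handling of all heavy rows and $i$-heavy columns must fit within $\tilde O(cN/B)$ I/Os; and (iii) per-$i$ bin packing must succeed with $c + O(1)$ bins of capacity $M/\log U$, i.e., $V_i$ must be $O(cM/\log U)$. The choice $c = \Theta(\sqrt{Z\log U/M})$ with $T_r = cM/\log^2 U$ satisfies all three with constant-factor slack, the extra $\log U$ factor in the heavy-row count being absorbed by $\tilde O$. A secondary subtlety is that every size estimate is only $(1\pm\varepsilon)$-accurate, but a small enough constant $\varepsilon$ throughout preserves all inequalities with updated constants.
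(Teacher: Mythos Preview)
Your proposal is correct but follows a genuinely different route from the paper. The paper obtains the row classes by \emph{recursive deterministic halving}: using the per-row estimates of \Cref{lem:colest} it locates a splitting index $t$ such that rows $1,\dots,t-1$ and rows $t+1,\dots,U$ each account for a $(1\pm\log^{-1}U)/2$ fraction of $\nnz(AC)$, emits the contribution of row $t$ directly via vector--matrix multiplication, and recurses. After $\log c + O(1)$ levels this yields $O(c)$ balanced row classes, and the identical procedure is then applied to the columns of $C$ inside each class. All randomness resides in the size estimator; once the estimates are fixed the coloring is deterministic, and exactly one row (respectively column) is emitted per split.

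Your approach instead performs a one-shot heavy/light decomposition, colors the light rows uniformly at random, and appeals to a weighted Chernoff bound for balance; the column side is handled by per-class greedy bin packing rather than recursion. This is arguably more modular and avoids the recursive bookkeeping, at the cost of a second source of randomness and the concentration argument. It also emits $\Theta(c\log U)$ heavy rows rather than the paper's $O(c)$, which is absorbed by the $\tilde O$. One small correction: with item sizes at most a quarter of capacity and total mass $V_i \le 2Z/c$, first-fit-decreasing guarantees only $O(c)$ bins, not $c+O(1)$ as you state; this matches the paper's own ``$O(c)$ disjoint sets'' and is harmless for the I/O bound, but your claimed bin count is a bit optimistic.
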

\begin{proof}
At a high level, the coloring will be computed by recursively splitting the matrix rows in two disjoint parts to form matrices $A_1$ and $A_2$ where $A_1$ contains the nonzeros from the first $t-1$ rows, for some $t$, and $A_2$ contains the nonzeros from the last $U-t$ rows.
Row number $t$, the ``splitting row'', will be removed from consideration by generating the corresponding part of the output using I/O-efficient vector-matrix multiplication.
We wish to choose $t$ such that:
\begin{enumerate}
	\item $\nnz(A_1 C) \in \left[ (1-\log^{-1} U)\nnz(AC)/2; (1+\log^{-1} U)\nnz(AC)/2 \right]$. \label{enu:split1}
	\item $\nnz(A_2 C) \in \left[ (1-\log^{-1} U)\nnz(AC)/2; (1+\log^{-1} U)\nnz(AC)/2 \right]$. \label{enu:split2}
\end{enumerate}
And after $ \log c + O(1)$ recursive levels of such splits, we will have $O(c)$ disjoint sets of rows from $A$.
For each such set we then compute disjoint column sets of $C$ in the same manner, and we argue below that this gives us subproblems with output size $\nnz(AC) / c^2 = M / \log U$, where each subproblem corresponds exactly to a pair of color sets as described above.

In order to compute the row number $t$ around which to perform the split, we invoke the estimation algorithm from \Cref{lem:nnzest} with $\varepsilon = \log^{-1} U$ such that for every row in $[AC]_{k*}$ we have access to an estimate $\hat{z}_k$ where it holds with probability at least $1 - U^{-l}$ (for fixed $l > 0$ chosen to get sufficiently low error probability):
\begin{equation}
 \hat{z}_k \in \left[ (1-\log^{-1} U)\nnz([AC]_{k*})/2; (1+\log^{-1} U)\nnz([AC]_{k*})/2 \right]. \label{eq:zest}
\end{equation}
In particular for any set of rows $r$ we have that
\begin{equation}
(1-\log^{-1} U)\nnz\left( \sum_{i \in r}  [AC]_{i*}\right) \leq \sum_{i \in r} \hat{z}_i \leq (1-\log^{-1} U)\nnz\left( \sum_{i \in r}  [AC]_{i*}\right). \label{eq:split}
\end{equation}

We will now argue that if we can create a split of the rows such that (\ref{enu:split1}) and (\ref{enu:split2}) hold, then when the splitting procedure terminates after $\log c + O(1)$ 
recursive levels, we have that for each pair of colors it is the case that $(A|S_i)(C|S'_j) < M / \log N$.
Consider the case where each split is done with the maximum positive error possible, i.e., on recursive level $q$ we have divided the $\nnz(AC)$ nonzeros into subproblems where each are of size at most  $\nnz(AC)(1/2 + 1/(2\log U))^q$. Remember that $c = \sqrt{\frac{\nnz(AC) \log N}{M}}$ is the number of colors. After $\log c + O(1)$ recursive levels we have subproblem size:

\begin{align}
\left( \frac{\nnz(AC)}{2} + \frac{\nnz(AC)}{2\log U} \right)^{\log c^2} &= \nnz(AC) 2^{-\log c^2} \left(1 + \frac{1}{\log U } \right)^{\log c^2} \nonumber \\
 & \leq \nnz(AC)2^{-\log c^2} e^{\frac{ \log c^2 }{\log U}} \label{eq:buckets2} \\
 & \leq \nnz(AC) O(1) / c^2\label{eq:buckets3} \\
 & = O( M / \log U ) \label{eq:buckets4}
\end{align}
The main observation to see that we get the right subproblem size as in (\ref{eq:buckets4}) is that for each recursion we decrease the output size by a factor $\Omega(c)$.
For (\ref{eq:buckets2}) we use $(1 + 1/x)^y \leq e^{y/x}$ and (\ref{eq:buckets3}) follows from $\nnz(AC) = N^{O(1)}$. The analysis for the case where each split is done with the maximum negative error possible is analogous and thus omitted.

We will now argue that with access to the $\hat{z}_i$ estimates as in (\ref{eq:zest}) we can always construct a split such that (\ref{enu:split1}) and (\ref{enu:split2}) hold.
Let partitions 1 and 2 be denoted $P_1$ and $P_2$ and $\hat{z} = \sum_i \hat{z}_i$ be the estimate of the total number of outputs for the current subproblem. Create $P_1$ by examining rows $[A]_{k*}$ one at a time.
If the estimated number of nonzeros of $P_1 \cup [A]_{k*}$ is less than $z/2$ then add $[A]_{k*}$ to $P_1$.
Otherwise perform dense-vector sparse-matrix multiplication $[A]_{k*}C$  using $\tilde{O}(\nnz(C)/B)$ I/Os \cite{brodal:spmv}and emit every nonzero of that product - this eliminates  the row vector $[A]_{k*}$ from matrix $A$ as all outputs generated by row $[A]_{k*}$ has now been emitted.
Because of (\ref{eq:split}) we have that the remaining rows of $A$ can now be placed in partition $P_2$ and the sum of their outputs will be at most $(1+\log^{-1} U)\nnz(AC)/2$. The procedure and analysis is equivalent for the case of columns.
From (\ref{eq:buckets4}) we had that even with splits of $\nnz(AC)(1/2 + \log (U)/2)$ nonzeros then the subproblem size is the desired $O(M / \log U)$ after all $\log c^2$ splits are done.

In terms of I/O complexity consider first the coloring of all rows in $A$. First we perform the size estimates of \Cref{lem:nnzest} in $\tilde{O}( N / B )$ such that we know where to split. Then we perform $c$ splits and each split also outputs the output entries for a specific row using dense-vector sparse-matrix multiplication, hence this split takes $c \nnz(C) = \tilde{O} \left( c N/B \right)$ I/Os. Finally for each of the $c$ sets of rows of $A$ we partition columns of $B$ in the same manner, first by invoking $c$ size estimations taking $\tilde{O} (N/B)$ due to the sum of the nonzeros in the $c$ subproblems being at most $N$. Then for each of the $c$ row sets we perform $c$ splits and output a column from $C$. This step takes time $\tilde{O}( c N / B)$ and hence in total we use
\(\tilde{O}( 3 c N / B + 2N/B) = \tilde{O} \left(\frac{N\sqrt{\nnz(AC)}}{B\sqrt{M}}\right).\)
\qed
\end{proof}

\subsection{I/O Complexity Analysis}\label{sec:ana}

Next, we will use \Cref{lem:color} for the algorithm that shows part (b) of \Cref{thm:main1}.

We summarize the steps taken and their cost in the external memory model.
\begin{proof}(\Cref{thm:main1}, part (b))
The algorithm first estimates $\nnz(AC)$ with parameters $\varepsilon = 1 / \log N$ and $\delta = 1 / U$ which by \Cref{lem:nnzest} uses $\tilde{O}( N / B )$ I/Os.
We then perform the coloring, outputting some entries of $AC$ and dividing the remaining entries into $c^2$ balanced sets for $c = \sqrt{\frac{\nnz(AC) \log U}{M}} +O(1)$.
By \Cref{lem:color} this uses $\tilde{O}\left(\frac{N\sqrt{\nnz(AC)}}{B\sqrt{M}}\right)$ I/Os.
Finally we invoke the compressed matrix multiplication algorithm from \Cref{lem:compressedmm}
on each subproblem.
This is possible since each subproblem has at most $\gamma M / \log U$ nonzeros entries in the output.
The total cost of this is $O(cN/B)$ I/Os,
since each nonzero entry in $A$ and $C$ is part of at most $c$ products,
and the cost of each product is simply the cost of scanning the input.
\qed
\end{proof}

%!TEX root = main.tex

\section{Lower bound}\label{sec:lowerbound}

Our lower bound generalizes that of Hong and Kung~\cite{Jia-Wei:1981:ICR:800076.802486} on the I/O complexity of dense matrix multiplication.
We extend the technique of~\cite{Jia-Wei:1981:ICR:800076.802486} while taking inspiration from lower bounds in~\cite{irony2004communication,pietracaprina2012space,ps13}.
The closest previous work is the lower bound in~\cite{ps13} on the I/O complexity of triangle enumeration in a graph, but new considerations are needed due to the fact that cancellations can occur.

Like the lower bound of Hong and Kung~\cite{Jia-Wei:1981:ICR:800076.802486}, our lower bound holds in a {\em semiring model\/} where:
\begin{itemize}
\item A memory block holds up to $B$ matrix entries (from the semiring), and internal memory can hold $M/B$ memory blocks.
\item Semiring elements can be multiplied and added, resulting in new semiring elements.
\item No other operations on semiring elements (e.g.~division, subtraction, or equality test) are allowed.
\end{itemize}

The model allows us to store sparse matrices by listing just non-zero matrix entries and their coordinates.
We note that our algorithm respects the constraints of the semiring model with one small exception:
It uses equality checks among semiring elements.

We require the algorithm to work for every semiring, and in particular over fields of infinite size such as the real numbers, and for arbitrary values of nonzero entries in $A$ and $C$. 
Since only addition and multiplication are allowed, we can consider each output value as a polynomial over nonzero entries of the input matrices. 
By the Schwartz-Zippel theorem~\cite[Theorem 7.2]{Motwani:1995:RA:211390} we know that two polynomials agree on all inputs if and only if they are identical.
Since we are working in the semiring model, the only way to get the term $A_{i,k} C_{k,j}$ in an output polynomial is to directly multiply these input entries.
That means that to compute an output entry $[AC]_{i,j}$ we need to compute a polynomial that is identical to the sum of elementary products $\sum_k A_{i,k} C_{k,j}$.
It is possible that the computation of this polynomial involves other nonzero terms, but these are required to cancel out.

We now argue that for every $N$ and $Z$ there exist matrices $A$ and $C$ with $\nnz(A)+\nnz(C) = \Theta(N)$ and $\nnz(AC) = \Theta(Z)$, for which every execution of an external memory algorithm in the semiring model must use $\Omega \left( \tfrac{N}{B} \min\left(\sqrt{\tfrac{Z}{M}},\tfrac{N}{M}\right) \right)$ I/Os.
Our lower bound holds for the {\em best possible execution\/}, i.e., even if the algorithm has been tailored to the structure of the input matrices.

The hard instance for the lower bound is a dense matrix product, which maximizes the number of elementary products. 
In particular, since we ignore constant factors we may assume that $\sqrt{Z}$ and $N/\sqrt{Z}$ are integers.
Let $A$ be a $(\sqrt{Z})$-by-$(N/\sqrt{Z})$ dense matrix, and let $C$ be a $(N/\sqrt{Z})$-by-$(\sqrt{Z})$ dense matrix.
Without loss of generality, every semiring element that is stored during the computation is either:
1) An input entry, or
2) Part of a sum that will eventually be emitted as the value of a unique nonzero element $[AC]_{i,j}$.

This is because these are the only values that can be used to compute an output entry (making use of the fact that additive and multiplicative inverses cannot be computed).
This implies that every output entry can be traced through the computation, and it is possible to pinpoint the time in the execution where an elementary product is computed and stored in internal memory.

We use the following lemma from~\cite{irony2004communication}:
\begin{lemma}\label{lem:irony}
In space $M$ the number of elementary products that can be computed and stored is at most $M^{3/2}$.
\end{lemma}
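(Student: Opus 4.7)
The plan is to interpret each elementary product combinatorially and then invoke the Loomis--Whitney inequality. Associate to every elementary product $A_{i,k}C_{k,j}$ the lattice point $(i,j,k)\in[U]^3$, and let $S$ be the set of points corresponding to elementary products that are computed and simultaneously ``held'' inside the computation (either as input entries, as stored elementary products, or as partial sums of output entries). Consider the three axis-aligned projections $\pi_A(S)=\{(i,k)\}$, $\pi_C(S)=\{(k,j)\}$, and $\pi_{AC}(S)=\{(i,j)\}$.

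The key observation is that each of these projections corresponds to a distinct kind of semiring value that must reside in the $M$-word memory. If $(i,j,k)\in S$, then (a)~the value $A_{i,k}$ had to sit in memory when the multiplication took place, (b)~so did $C_{k,j}$, and (c)~the result contributes to some cell $[AC]_{i,j}$ whose partial sum must also occupy memory (or be held as the freshly computed elementary product itself, which by the semiring-model assumption is one of the two kinds of values allowed in working memory). Since memory has capacity $M$, no more than $M$ distinct $(i,k)$ pairs, $M$ distinct $(k,j)$ pairs, and $M$ distinct $(i,j)$ pairs can be represented, i.e., $|\pi_A(S)|,|\pi_C(S)|,|\pi_{AC}(S)|\le M$.

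The Loomis--Whitney inequality in three dimensions states
\[
|S|^2\;\le\;|\pi_A(S)|\cdot|\pi_C(S)|\cdot|\pi_{AC}(S)|,
\]
and substituting the bound $M$ for each of the three projections gives $|S|^2\le M^3$, hence $|S|\le M^{3/2}$, which is the claimed bound.

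The main subtlety — and therefore the obstacle to be handled carefully — is that over a long execution many more than $M$ distinct values can pass through memory, so one cannot apply the Loomis--Whitney argument globally. The standard workaround, which is how this lemma gets deployed in the Hong--Kung style lower bound, is to split the execution into \emph{phases} during which the set of $A$-, $C$-, and partial-output values that are ever resident is bounded by a constant multiple of $M$ (for instance, by starting a new phase every $M/B$ I/Os, so that at most $M$ fresh values arrive in addition to the $M$ already present). Within one such phase the above three projections are each at most $O(M)$, and the Loomis--Whitney argument then yields the $O(M^{3/2})$ bound per phase, which is exactly what the subsequent lower bound argument uses to turn $\Omega(Z^{3/2}/\text{(products per phase)})$ phases into the desired I/O lower bound.
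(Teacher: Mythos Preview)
The paper does not actually prove this lemma; it merely cites it from Irony, Toledo, and Tiskin~\cite{irony2004communication}, so there is no in-paper proof to compare against. Your Loomis--Whitney argument is correct and is precisely the standard proof that appears in that reference: map each elementary product $A_{i,k}C_{k,j}$ to a lattice point $(i,j,k)$, bound each of the three coordinate projections by the memory size, and conclude $|S|\le M^{3/2}$.

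One small remark on scope: your final paragraph about splitting the execution into phases of $M/B$ I/Os is not part of the proof of the lemma itself but rather of how the lemma is \emph{applied} in the surrounding lower-bound argument. The paper handles the phase decomposition separately (in the paragraphs following the lemma), exactly as you describe, so you have correctly anticipated the context---just be aware that the lemma as stated is the purely local, per-phase claim, and the phase argument belongs to the proof of Theorem~\ref{thm:lower} rather than to Lemma~\ref{lem:irony}.
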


Following~\cite{ps13}, observe that any execution of an I/O efficient algorithm can be split into {\em phases\/} of $M/B$ I/Os.
By doubling the memory size to $2M$ we find an equivalent execution where every read I/O happens at the beginning of the phase (before any processing takes place), and every write I/O happens at the end of the phase.
For every phase we can therefore identify the set of at most $2M$ input and output entries that involved in the phase.

If all values needed for emitting a particular output entry are present in a phase there may not be any storage location that can be associated with it.
We first account for such {\em direct\/} outputs:
Each direct output requires two vectors of length $N/\sqrt{Z}$ to be stored in main memory.
In each phase we can store at most $M\sqrt{Z}/N$ such vectors, resulting in at most $M^2 Z/N^2$ output pairs.
So the number of phases needed to emit, say, $Z/2$ outputs would be at least $(N/M)^2$, using $N^2/(MB)$ I/Os.
This means that to output a substantial portion of $AC$ in this way we need at least this number of I/Os.

Next, we focus on output entries for which an elementary product is written to disk in some phase.
By Lemma~\ref{lem:irony} the number of elementary products computed and stored is at most $(2M)^{3/2}$.
If the total number of elementary products is $p$ then we need at least $p / (2M)^{3/2}$ phases of $M/B$ I/Os each.
Considering $Z/2$ output entries in our hard instance, these contain $N\sqrt{Z}/2$ elementary products.

Since $Z/2$ outputs are needed either in the direct or the indirect way, the number of I/Os needed becomes the minimum of the two lower bounds
we get Theorem~\ref{thm:lower}.

\bibliographystyle{abbrv}
\bibliography{interbib}

\end{document}